\newcommand{\R}{\ensuremath{\mathbb{R}}}
\newcommand{\NNG}{\ensuremath{\mathrm{NNG}}}
\newtheorem{theorem}{Theorem}
\newtheorem{corollary}{Corollary}
\newtheorem{obs}{Observation}
\newtheorem{lemma}{Lemma}
\title{Minimizing interference in ad-hoc networks with bounded communication radius\footnote{A preliminary version of this paper appeared in the proceedings of the  22nd International Symposium on Algorithms and Computation (ISAAC 2011).}}
\author{Matias Korman\thanks{Universitat Polit\`ecnica de Catalunya (UPC), Barcelona. {\tt{matias.korman@upc.edu}}. With the support of the Secretary for Universities and 
Research of the Ministry of Economy and Knowledge of the Government of Catalonia and the European Union.}}
\begin{document}
%\setpagewiselinenumbers
%\modulolinenumbers[5]
%\linenumbers

\maketitle

\begin{abstract}
We consider a topology control problem in which we are given a set of sensors in $\R^d$ and we would like to assign a communication radius to each of them so that they generate a connected network and have low receiver-based interference (defined as the largest in-degree of the network). We show that any radii assignment that generates a connected network can be modified so that interference is (asymptotically) unaffected and no sensor is assigned  communication radius larger than $R_{\min}$, where $R_{\min}$  is the smallest possible radius needed to obtain strong connectivity. Combining this result with the previous network construction methods~\cite{ht-2008-miwanp,rwz-2009-amiwasn}, we obtain a way to construct a connected network of low interference and bounded radii. Since the radius of a sensor is only affected by neighboring sensors, this construction can be done in a distributed fashion.

\end{abstract}

\section{Introduction}
%With the proliferation of computers, the study of ad-hoc wireless networks is becoming more and more popular. This kind of 
Ad-hoc networks are commonly used whenever a number of electronic devices are spread across a geographical area and no central communication hub exists~\cite{bghw-2008-cmn,s-2005-tcwasn}. In order to send messages between two sensors located far from each other, the message is repeated by other devices located between them. %Due to the high amount of applications of these kind of networks, several papers have appeared dealing with various problems related with these types of networks. 
%Most of the research in this field focuses on energy consumption. 
Due to technical constraints, the devices normally have very limited power sources (such as a small battery or solar cells). Since energy is the limiting factor for the operability of these networks, various methods have been proposed to reduce energy consumption~\cite{bp-2008-ocmiasn,ht-2008-miwanp,rwz-2009-amiwasn}. %save energy, while preserving some desired property. % (such as the connectivity of the network).

In most cases the transmission radius is the major source of power dissipation in wireless networks. %of a node is a controllable parameter and a monotone function of the electric power used. % (typically quadratic or cubic).  %Of course, those radii tend to be the major source of power dissipation in wireless networks. 
%Thus, one would like to assign a transmission radius to each node in a way that the sensors can communicate with each other and some objective function of the radii is optimized. 
Another issue that strongly affects energy consumption is interference. Intuitively, the interference of a network is defined as the largest number of sensors that can directly communicate with a single point in the plane. Indeed, lowering the interference reduces the number of package collisions and saves considerable amounts of energy which would otherwise be spent in retransmission. %Although interference has been widely accepted as one of the factors that have most impact on energy consumption, most of the previous research has focused in implicitly reducing the interference (by, for example, constructing a sparse network). Very recently, the results of \cite{brwz-dtcri-04,rwz-2009-amiwasn} show that this is not necessarily true (showing examples of sparse networks with large interference). In the light of this results, recent research focuses in controlling interference explicitly. 

In this paper we  look for an algorithm that assigns a transmission radius to a given list of sensors in a way that the network is connected and has low interference. %This problem is considered one of the most relevant open optimization problems in the field \cite{bp-2008-ocmiasn}.
 Additionally we consider the case in which no sensor can be assigned a large radius; although theoretically one could assign an arbitrarily large radius to a sensor, in many cases this is not possible (due to hardware constraints, environmental noise, quick battery drainage, etc.). %Hence, the constraint in which no sensor can have a radius larger than a given constant comes as a natural extension of the problem.
  We note that, in virtually all cases in which ad-hoc networks are used, sensors do not have knowledge of the location of other sensors. As a result, our aim is to give a method to construct the network in a local fashion. That is, that the communication radius of a given sensor does not depend in the location or existence of sensors that are far away. 

\section{Definitions and results}
We model each device as a point in $\R^d$ (typically $d=2$) and its transmission radius as a positive real value. Given a set $S$ of sensors, we look for a radii assignment $r:S\rightarrow \R^+$. The value $r(s)$ is the {\em communication radius} (or radius for short), and gives an idea of how far can the messages emitted from $s$ reach (or equivalently, how strong the signal is). Any radii assignment defines an undirected graph $G_r=(S,E)$ with $S$ as the ground set. The most commonly adopted model is the {\em symmetric} model in which there is an undirected edge $st\in E$ if and only if $\min\{r(s),r(t)\} \geq d(s,t)$, where $d(\cdot, \cdot)$ denotes the Euclidean distance (that is, we add an edge between two sensors if they can send messages to each other). % (i.e. if $s$ can communicate directly with $t$).
Clearly, a requirement for any assignment $r$ is that the associated graph is strongly connected (i.e., for any $u,v\in S$ there is a directed path from $u$ to $v$ in $G_r$). Whenever this happens, we say that $r$ is {\em valid}. 

Asymmetric communication models have also been studied in the literature (see~\cite{fwz-iar-05,rwz-2009-amiwasn}, section IV). In this second model, any radii assignment $r$ defines a directed graph $G'_r=(S,E')$ and there is a directed edge $\overrightarrow{uv}\in E'$ whenever $r(u)\geq d(u,v)$. It is easy to observe that whenever we have $uv\in E$. In particular, $G'_r$ must be strongly connected whenever $r$ is valid. More importantly, although the focus of this paper is in the symmetric model, the same result also holds in the asymmetric case.

We model interference following the {\em received based interference} model~\cite{rwz-2009-amiwasn}. For a fixed radii assignment $r$, the {\em interference} $I(p)$ of any point $p\in\R^d$ is defined as the number of sensors that can directly communicate with $p$ (that is, $I(p)=||\{s\in S\mid r(s)\geq d(s,p)\}||$). The {\em interference} of the network (or the radii assignment) is defined as the point in $\R^d$ with largest interference.\footnote{In fact, the definition of~\cite{rwz-2009-amiwasn} only measures interference at the sensors. The extension to measuring the interference to $\R^d$ was done in \cite{ht-2008-miwanp}.} 

A more geometric interpretation of the interference is as follows: for each sensor $s\in S$ place a disk of radius $r(s)$ centered at $s$. The interference of a point is equal to the depth of that point in the arrangement of disks (analogously, the interference of the network is equal to the depth of the deepest point). This model has been widely accepted, since it has been empirically observed that most of the package collisions happen at the receiver. The interested reader can check \cite{s-2005-tcwasn} or \cite{rwz-2009-amiwasn} to see other models of  interference. %Although other type of models exist (sender-based interference \cite{}, or only measuring the interference at the sensors \cite{}). 

We say that a radii assignment $r$ has bounded radius $R$ if it satisfies $r(s)\leq R$ for all $s\in S$. If we are only interested in creating a valid network with the smallest possible communication radius, the simplest approach is to consider the uniform-radius network. In this network all sensors are assigned the same radius $R_{\min}$ defined as the minimum possible value so that the associated network is strongly connected. It is easy to see that $R_{\min}$ is equal to the length of the longest edge of the minimum spanning tree of $S$. Unfortunately it is easy to see that the interference of this approach, commonly denoted by $\Delta$, can be as high as $n$ (for example, a single point located far from a large cluster of points). 

Computing the radii assignment that minimizes the interference of a given point set is NP-hard. More specifically, Buchin~\cite{b-2008-mmih} showed that it is NP-hard to obtain a valid radii assignment that minimizes the network's interference, or even approximate it with a factor of $4/3$ or less. %\footnote{We note that Buchin's result only claims hardness for the symmetric model. However, the exactly same construction also shows NP-hardness for the  asymmetric one as well.}
  As a result, most of the previous research focuses in constructing valid networks with bounded interference, regardless of what the optimal assignment is for the given instance. 
For the symmetric 1-dimensional case (or {\em highway model}), von Rickenbach {\em et al.}~\cite{rwz-2009-amiwasn} gave an algorithm that constructs a network with $O(\sqrt{\Delta})$ interference, and showed that this algorithm approximates the minimum possible interference by a factor of $O(\sqrt[4]{\Delta})$. Afterwards, Halld\'{o}rsson and Tokuyama~\cite{ht-2008-miwanp} generalized the symmetric construction to higher dimensions, although the approximation factor does not hold anymore. Moreover, their construction uses bucketing to certify that their network has bounded radius $\sqrt{d}R_{\min}$.

A variation called the {\em all-to-one} problem was considered for the asymmetric model in~\cite{fwz-iar-05, rwz-2009-amiwasn}. In this problem, one would like to assign radii in a way that all sensors can communicate to a specific sensor $s\in S$ (called the {\em sink}). By adding a sufficiently large communication radius to $s$ we can obtain a strongly connected directed network. Note that their construction has bounded radius $R_{\min}$ for all sensors other than the sink. Unfortunately, their method needs knowledge of the whole network, and cannot be easily adapted to symmetric networks. The study of the interference generated by a random set of points was also done in~\cite{kdh-biwahnnrp-12}.

In this paper we show that any radii assignment can be transformed to another one with (asymptotically speaking) the same interference and bounded radius $R_{\min}$. In our construction, the radius of a sensor is only affected by the sensors located in its neighborhood. As a result, this network can be constructed using only local information, even if the original assignment didn't have this property.

\section{Bounded radius network}\label{sec_bounded}\label{sec_bucket}
%In the following we assume that we are given any radii assignment algorithm $\mathcal{A}$. This algorithm constructs We denote by $\mathcal{A}_S$ to the radii assignment that algorithm $\mathcal{A}$ gives to $S$ (that is, $\mathcal{A}_S: S \rightarrow \R$). Also, let $G_S$ be the associated network. 

%In the following, we state the main result of this section. 
%A radii assignment algorithm $\mathcal{A}$ is a method to assign communication radius to all sensors of any given set of sensors $S$. We denote by $\mathcal{A}_S$ to the radii assignment that $\mathcal{A}$ gives to $S$ (that is, $\mathcal{A}_S: S \rightarrow \R$) and $G_\mathcal{A}(S)$ to the network that $\mathcal{A}$ constructs for $S$. Given a function $i:\R \rightarrow \R$, we say that $\mathcal{A}$ has {\em bounded interference} $i(n)$ if the interference of $G_\mathcal{A}(S)$ for any set $S$ of $n$ sensors is at most $i(n)$. Also, $\mathcal{A}$ has {\em bounded radius} $r$ if for any set $S$ of sensors and $s\in S$, the radii assignment satisfies $\mathcal{A}_S(s)\leq r$.

The objective of this section is to prove the following result: %The main result of this paper is the following one:

\begin{theorem}\label{theo_boundra}
%Any radii assignment algorithm $\mathcal{A}$ of interference $i(n)$ can be transformed to another algorithm $\bar{\mathcal{A}}$ of interference $O(i(O(\Delta))+1)$ and bounded radius $R_{\min}$.
Any valid radii assignment of interference $i$ can be transformed into another assignment of interference $O(i)$ and bounded radius $R_{\min}$.%this assignment satisfies $\bar{f}(s)\leq R_{\min}$ for any set $S$ of points and any $s\in S$.
\end{theorem}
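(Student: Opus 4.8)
The plan is to obtain $r'$ from $r$ by two completely local edits. Let $T$ be a Euclidean minimum spanning tree of $S$, and for each sensor $v$ let $\ell_v$ be the length of the longest edge of $T$ incident to $v$; note that $\ell_v\le R_{\min}$ by the characterization of $R_{\min}$. I would set
\[
  r'(v)\;=\;\max\bigl\{\,\min\{r(v),R_{\min}\},\ \ell_v\,\bigr\}.
\]
That is: first cap every radius at $R_{\min}$, then, if a sensor was left unable to afford its longest tree edge, raise it just enough to afford it. The two required properties are then immediate. First, $r'(v)\le\max\{R_{\min},\ell_v\}=R_{\min}$, so the assignment has bounded radius $R_{\min}$. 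Second, for every edge $vw$ of $T$ we have $r'(v)\ge\ell_v\ge d(v,w)$ and likewise $r'(w)\ge d(v,w)$, hence $vw\in G_{r'}$; so $G_{r'}\supseteq T$ is connected and $r'$ is valid. (Alternatively one can cap at $R_{\min}$ and then reconnect the resulting components using one tree edge per component; the interference analysis is essentially the same.)

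It remains to bound the interference. Fix a point $p\in\R^d$. A sensor $v$ reaches $p$ under $r'$ iff $\min\{r(v),R_{\min}\}\ge d(v,p)$ or $\ell_v\ge d(v,p)$. A sensor satisfying the first alternative has $r(v)\ge d(v,p)$ and hence already reaches $p$ under $r$, so there are at most $i$ of these. Therefore the number of sensors reaching $p$ under $r'$ is at most $i+|B_p|$, where $B_p=\{\,v:\ r(v)<d(v,p)\le\ell_v\,\}$ is the set of ``boosted'' sensors that now reach $p$. The whole theorem thus reduces to proving $|B_p|=O(i)$ for every $p$.

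The hard part is exactly this last claim, and it is where the connectivity and the interference bound of $r$ must both be used. Each $v\in B_p$ lies within $\ell_v\le R_{\min}$ of $p$, fails to reach $p$ under $r$, yet has a tree-neighbour $w_v$ with $d(v,w_v)=\ell_v\ge d(v,p)$. Since $G_r$ is connected, $v$ has a neighbour $u_v$ in $G_r$ with $d(v,u_v)\le r(v)<d(v,p)\le d(v,w_v)$, and the empty-lune property of Euclidean MST edges then forces $d(u_v,w_v)\ge d(v,w_v)$, i.e. $u_v$ is pulled toward $p$ and away from $w_v$. Intuitively the members of $B_p$ are pendant-like objects crowding near $p$, each handing the burden of long-range connectivity to a nearby sensor, and the sensors that absorb those burdens are reached by many others — which is precisely why $i$ is already large wherever $B_p$ is large. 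To make this rigorous I would charge each $v\in B_p$ to an edge of $G_r$ incident to $v$ or to $u_v$ (both endpoints of such an edge lie in the ball of radius $2R_{\min}$ around $p$), and bound the total charge using two ingredients: $G_r$ has maximum degree at most $i$, since every neighbour of a vertex $x$ has $x$ within its radius and hence reaches $x$; and these edges cannot pile up too densely near $p$, by the bounded vertex degree of the Euclidean MST together with the empty-lune property. Getting this count to be $O(i)$ rather than, say, $O(i\log n)$ — in particular handling boosted sensors whose witnesses $u_v$ are spread out rather than shared by many of them — is the step I expect to be the main obstacle.
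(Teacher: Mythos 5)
Your reduction is clean up to the last step, but the last step is not merely the hard part --- it is false. The claim $|B_p|=O(i)$ fails on the exponential chain. Take sensors at $q_j=2^{-j}$ for $j=0,\dots,k$ (plus a few far-away points so that $R_{\min}$ is whatever you like, say $1$), and let $p=\varepsilon$ for a tiny $\varepsilon>0$. The MST is the sorted path, the edge $q_jq_{j-1}$ has length $2^{-j}$, so $\ell_{q_j}=2^{-j}=d(q_j,0)>d(q_j,p)$ for $1\le j\le k-1$: \emph{every} interior sensor of the chain satisfies $d(q_j,p)\le\ell_{q_j}$. Now take for $r$ the standard hub assignment: designate every $\sqrt{k}$-th sensor a hub with radius $1$, and give each non-hub $q_j$ radius equal to its distance to the nearest hub, which is the first hub \emph{closer to the origin} and hence lies at distance strictly less than $2^{-j}$. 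This $r$ is valid and has interference $i=O(\sqrt{k})$, yet all $k-O(\sqrt{k})$ non-hubs satisfy $r(q_j)<d(q_j,p)\le\ell_{q_j}$, so your rule boosts each of them to radius $\ell_{q_j}\ge d(q_j,p)$ and they all cover $p$. The resulting interference at $p$ is $\Omega(k)=\Omega(i^2)$, not $O(i)$. Your intuition that ``$i$ is already large wherever $B_p$ is large'' is exactly what this example defeats: the connectivity burden of the $k$ boosted sensors is absorbed by only $\sqrt{k}$ hubs, so $i$ stays small while $B_p$ is huge. The underlying obstruction is that $\Omega(n)$ MST edges can each have their nearer endpoint within the edge's own length of a single point, so ``afford your longest tree edge'' is far too aggressive a repair rule.

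The paper avoids this by never tying the boosted radius to an MST edge length at all, and by boosting only a \emph{bounded number of sensors per unit of space}. It tiles $\R^d$ with cubes of side $R_{\min}$, splits each cube into $O(1)$ clusters using the uniform-radius graph $G_u$, elects $O(1)$ ``leaders'' per cluster and $O(1)$ ``witnesses'' per pair of neighboring clusters, and raises only those sensors to radius exactly $R_{\min}$; everyone else is simply capped at $\min\{r(s),R_{\min}\}$. Connectivity is restored because any $G_r$-edge destroyed by capping had both endpoints at radius $\ge R_{\min}$, and any two radius-$R_{\min}$ sensors can route through leaders and witnesses (Lemmas~\ref{lem_leaders} and~\ref{lem_rmin}); interference grows only additively by a constant because a point sees $O(1)$ buckets, each containing $O(1)$ boosted sensors. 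If you want to rescue your scheme, you would have to replace ``boost $v$ to $\ell_v$'' by a reconnection step that charges each point of the plane only $O(1)$ boosted sensors; that is essentially what the bucket-and-leader construction accomplishes.
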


We first give an intuitive idea of our construction. The algorithm classifies the elements of $S$ into clusters. For each cluster we select a constant number of sensors (which we call the {\em leaders} of the cluster) and assign them communication radius $R_{\min}$. The main property of this set is that they are capable of sending messages to any other sensor of the cluster by only hopping through other leaders. Reciprocally, any other sensor whose radius is $R_{\min}$ will be able to communicate to a leader $\ell$ (and thus all other sensors of the cluster by hopping through $\ell)$. Hence, no sensor will ever need to have radius strictly larger than $R_{\min}$. Finally, we will connect clusters in a way that interference will not grow, except by a constant value.
%. Geometrically speaking, we are placing disks of radius $R_{\min}$ centered at the leaders in a way that the whole cluster is covered. In particular, any sensor of the cluster has a leader whose distance is at most $R_{\min}$. We now apply the radii assignment algorithm to each cluster independently with one modification: whenever a sensor is assigned a large radius, we reduce it to $R_{\min}$. By hypothesis we will have low interference and strong connectivity inside each cluster. 

More formally, the algorithm is as follows. Virtually partition the plane into $d$-dimensional cubes of side length $R_{\min}$ (each of these cubes will be referred as a {\em bucket}). For each bucket $B$, let $S_B$ be the sensors inside $B$ (i.e., $S_B= S\cap B$). Without loss of generality we can assume that no point lies in two buckets (this can be obtained by doing a symbolic perturbation of the point set). We say that two sensors $s,t\in S$ belong to the same cluster if and only if they belong to the same bucket $B$ and there is a path connecting them in the subgraph $G_u[S_B]$, where $G[S']$ denotes the subgraph of a graph $G=(S,E)$ induced by a subset of vertices $S'\subseteq S$, and $G_u$ is the network associated to the uniform radius network. 

\begin{lemma}\label{lem_clust}
For any fixed dimension $d$, there can be at most $O(1)$ clusters inside any bucket. Moreover, in each cluster there are at most $O(\Delta)$ sensors.
\end{lemma}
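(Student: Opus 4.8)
The plan is to bound the two quantities separately using simple geometric and counting arguments. For the first claim (at most $O(1)$ clusters per bucket), the key observation is that $R_{\min}$ is the length of the longest edge of the minimum spanning tree of $S$; in particular, for \emph{any} bucket $B$ and any sensor $s \in S_B$, the nearest other sensor of $S$ is at distance at most $R_{\min}$ from $s$. I would argue as follows. Consider a bucket $B$ and look at the connected components of $S_B$ under the uniform-radius graph $G_u[S_B]$ — these are exactly the clusters inside $B$. Suppose there are two distinct clusters $C_1, C_2 \subseteq S_B$. Since $B$ has diameter $\sqrt{d}\,R_{\min}$, any two sensors in $B$ are within distance $\sqrt{d}\,R_{\min}$; but sensors in different clusters are at distance strictly greater than $R_{\min}$ from each other (otherwise the uniform-radius edge would merge them). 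Now I would pick one representative sensor from each cluster and observe that these representatives form a point set of pairwise distances in the range $(R_{\min}, \sqrt d\, R_{\min}]$ inside a cube of side $R_{\min}$; a standard packing argument (place disjoint balls of radius $R_{\min}/2$ around each representative, all contained in a slightly enlarged cube) bounds the number of such representatives by a constant depending only on $d$. Hence the number of clusters in $B$ is $O(1)$.

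For the second claim (at most $O(\Delta)$ sensors per cluster), I would use the fact that the interference $\Delta$ of the uniform-radius network is the maximum depth in the arrangement of radius-$R_{\min}$ disks, and relate it to the number of sensors that can be packed close together. Specifically, if a cluster $C$ inside bucket $B$ contains many sensors, then because $C$ is connected in $G_u[S_B]$ and lies inside a cube of diameter $\sqrt d\, R_{\min}$, every point of $C$ is within distance $\sqrt d\, R_{\min}$ of every other. Pick any sensor $s \in C$ (or any point of $B$): every sensor of $C$ lies within distance $\sqrt d\, R_{\min}$ of $s$. I would then cover the ball of radius $\sqrt d\, R_{\min}$ around $s$ by a constant number (depending on $d$) of balls of radius $R_{\min}/2$; by pigeonhole, one of these small balls contains at least $|C|/O(1)$ sensors of $C$, and its center $p$ then has $I(p) \geq |C|/O(1)$ in the uniform-radius network since every sensor within distance $R_{\min}/2$ of $p$ is within its own radius $R_{\min}$ of $p$. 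Therefore $|C| = O(\Delta)$.

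The main obstacle — really the only delicate point — is making the packing/covering constants clean and making sure the "same bucket, hence bounded diameter" reasoning is airtight: one has to be careful that the diameter bound is $\sqrt d\, R_{\min}$ (not $R_{\min}$), so the covering balls and the resulting constants genuinely depend on $d$, which is fine since $d$ is fixed. A secondary point worth stating explicitly is why different clusters really are at pairwise distance $> R_{\min}$: this is immediate from the definition, since if two sensors in the same bucket were within distance $R_{\min}$ they would be joined by an edge of $G_u[S_B]$ and hence lie in the same cluster. Everything else is routine volume/packing counting, so I would not belabor it.
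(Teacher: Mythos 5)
Your proof is correct and follows essentially the same route as the paper: both arguments reduce to covering the bucket (which has diameter $\sqrt{d}\,R_{\min}$, not $R_{\min}$) by a constant number of regions of diameter at most $R_{\min}$ and invoking the pigeonhole principle, the paper doing so with an explicit partition into $d^d$ sub-cubes of side $R_{\min}/d$ while you use an equivalent ball-packing/covering formulation. Your observation that sensors in distinct clusters of the same bucket are at pairwise distance greater than $R_{\min}$ is exactly the contradiction the paper derives inside a shared sub-cube, so the two proofs differ only cosmetically.
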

\begin{proof}
Partition $B$ into $d^d$ cubes of side length $R_{\min}/d$. By construction, the largest distance between any two sensors in the same sub-bucket is $\sqrt{d}R_{\min}/d=R_{\min}/\sqrt{d} \leq R_{\min}$. If there exists a bucket $B$ with more than $d^d$ clusters, we use the pigeonhole principle and obtain that there must exist at least one sub-bucket with sensors of two different clusters. In particular, there will be an edge in $G_u[S_B]$ that connects these sensors, contradicting with the definition of cluster. Proof for the second claim is identical (this time we would find a sensor whose interference in $G_u$ is larger than $\Delta$)
\end{proof}

%We now explain how is the choice of leaders done for a cluster $c$ of a block $B$.
%Observe that, by definition of cluster, there cannot be an edge in $T(c)$ connecting two points whose distance between them is larger than $R_{\min}$.
%Otherwise, let $T(c)$ be the minimum spanning tree of $c$. Observe that by definition of the cluster, we have $T(c) \subseteq G_u[c]$.   
The set of leaders of a given cluster $c$ inside a bucket $B$ is constructed as follows: as in the proof of Lemma \ref{lem_clust}, partition $B$ into cubes of sidelength $R_{\min}/d$. If all the points belong to the same sub-bucket, we pick any point as the leader. Otherwise, for any two different sub-buckets pick any edge $e$ in $G_u[c]$ connecting the two sub-buckets (if any exists) and add the vertices of $e$ to the set of leaders. We repeat this process for all pairs of sub-buckets and obtain the set of leaders.

\begin{lemma}\label{lem_leaders}
For any cluster $c$, its associated set $L_c$ of leaders has constant size. Moreover, for any sensor $s\in c$, there exists a leader $\ell_s\in L_c$ such that $d(s,\ell_s)\leq R_{\min}$.
\end{lemma}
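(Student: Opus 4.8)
The plan is to establish the two claims separately, both times exploiting the fixed partition of the bucket $B$ into $d^d$ sub-buckets of side length $R_{\min}/d$ that was used in the construction of $L_c$. For the cardinality bound, I would simply trace through the construction: either $c$ lies in a single sub-bucket, in which case $L_c$ is a single point, or the leaders are produced by iterating over unordered pairs of distinct sub-buckets and adding at most two vertices (the endpoints of one chosen connecting edge) for each pair. Since there are exactly $d^d$ sub-buckets, there are at most $\binom{d^d}{2}$ such pairs, so in all cases $|L_c| \le 2\binom{d^d}{2} + 1$, a constant depending only on $d$. (Repetitions among endpoints of different chosen edges only make $|L_c|$ smaller, so this bound is safe.)

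For the distance claim, the key geometric ingredient I would invoke first is the same diameter estimate already used in Lemma~\ref{lem_clust}: any two points inside a common sub-bucket are at distance at most $\sqrt{d}\,R_{\min}/d = R_{\min}/\sqrt{d} \le R_{\min}$. Hence it suffices to show that, for every sensor $s \in c$, the construction places at least one leader in the sub-bucket $\sigma$ that contains $s$. In the degenerate case where all of $c$ lies in one sub-bucket this is immediate, since the unique leader lies in $\sigma$ together with $s$.

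The remaining case is where $c$ meets at least two sub-buckets, and this is where I would use that $c$ is a \emph{connected} piece of $G_u[S_B]$, not merely a subset of $B$. Let $\sigma$ be the sub-bucket containing $s$. Since $c$ is a connected component of $G_u[S_B]$ and $c$ has vertices outside $\sigma$, the vertex set $c \cap \sigma$ cannot be closed under adjacency in $G_u[c]$; therefore $G_u[c]$ contains an edge with one endpoint in $\sigma$ and the other in some distinct sub-bucket $\sigma'$. Consequently, when the construction processed the pair $\{\sigma,\sigma'\}$ it found at least one connecting edge and added both of its endpoints to $L_c$ — one of which lies in $\sigma$. That leader and $s$ share the sub-bucket $\sigma$, so by the diameter estimate above their distance is at most $R_{\min}$, which proves the claim.

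I do not anticipate a real obstacle here; the only step requiring care is the connectivity argument of the last paragraph — justifying that the sub-bucket of $s$ itself must be incident to one of the chosen cut edges. This is precisely where the definition of a cluster as a connected subgraph of $G_u[S_B]$ (rather than an arbitrary subset of a bucket) is essential, and without it the statement would fail.
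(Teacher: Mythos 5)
Your proof is correct and follows essentially the same route as the paper: the same $2\binom{d^d}{2}$ counting over pairs of sub-buckets, and the same reduction of the distance claim to exhibiting a leader in the sub-bucket of $s$ via an edge of $G_u[c]$ leaving that sub-bucket (the paper extracts this edge from an explicit path to a sensor in another sub-bucket, while you phrase it as $c\cap\sigma$ not being closed under adjacency — the same connectivity argument).
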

\begin{proof}
For every pair of sub-buckets occupied by sensors of $c$, two sensors are added into $L_c$. Since a bucket is partitioned into $d^d$ sub-buckets, at most $2\times {d^d\choose 2}$ sensors will be present in $L_c$ (a constant for any fixed dimension). In order to show the second claim it suffices to observe that for any sensor $s$ there exists a leader that belongs to the same sub-bucket. %Combining this result with the fact that the largest distance between any two points inside the same sub-bucket is at most $R_{\min}/\sqrt{d}$ will give the lemma. 
If all sensors are located in a single sub-bucket, the claim is trivially true. Otherwise, for any given sensor $s\in c$, let $t\in c$ be any sensor located in a different sub-bucket, $\pi=(s=v_0, \ldots, v_k=t)$ be any path connecting them in $G_u[c]$, and let $i>0$ be the smallest index such that $v_i$ does not belong to the the same sub-bucket as $s$. By definition of leaders, there must be a sensor of the sub-bucket in which $v_{i-1}$ belongs to in $L_c$ (since the edge $v_{i-1}v_i$ is present in $G_u[c]$ and the two sensors belong to different sub-buckets). In particular, this leader will be within $R_{\min}$ communication distance to $v_{i-1}$ (and $s$, since they belong to the same sub-bucket). 
\end{proof}

We assign radius to all sensors $s$ of cluster $c$ as follows.  If $s\in L_c$, we assign radius $R_{\min}$. Otherwise, we assign radius equal to $\min\{r(s), R_{\min}\}$, where $r:S\rightarrow \R$ is the radii assignment of interference $i$. %(that is, the minimum between the radius that $\mathcal{A}$ would apply when giving $c$ as input and $R_{\min}$). 

Finally, we must add a small modification to certify connectivity between clusters. We say that two clusters $c,c'$ are {\em neighboring} if there exist sensors $u\in c, v\in c'$ such that $d(u,v)\leq R_{\min}$ (we say that $u$ and $v$ are the {\em witnesses}). For any two neighboring clusters, pick any two witnesses $u,v$ and increase their radius to $R_{\min}$. Let  $\bar{r}$ be the obtained radii assignment and $G_{\bar{r}}$ its associated network. 

%In order to show that $G_{\bar{r}}$ is a connected graph. We start with a couple of helpful results.

\begin{obs}\label{obs_neigh}
A cluster $c$ can only have a constant number of neighboring clusters. Moreover, the interference of a point $p$ in $G_{\bar{r}}$ can only be affected by sensors that are in the same or a neighboring bucket of the one containing $p$. 
\end{obs}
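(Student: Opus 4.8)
The plan is to prove the two assertions separately; both reduce to the fact that, in the construction of $\bar r$, \emph{no} sensor is assigned a radius exceeding $R_{\min}$, combined with the bounded local complexity of the grid of buckets. So the first thing I would record is this radius bound: leaders and chosen witnesses receive radius exactly $R_{\min}$, while every remaining sensor $s$ receives $\min\{r(s),R_{\min}\}\le R_{\min}$; hence $\bar r(s)\le R_{\min}$ for all $s\in S$.

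For the first assertion, fix a cluster $c$ and let $B$ be the bucket containing it. If $c'$ is a neighboring cluster with witnesses $u\in c$, $v\in c'$, then $d(u,v)\le R_{\min}$, so the bucket $B'$ containing $c'$ must contain a point within distance $R_{\min}$ of a point of $B$. I would then argue that in the regular grid of $R_{\min}$-cubes this forces $B'$ to differ from $B$ by at most two steps in every coordinate (if two cells differ by $k\ge 1$ grid steps in some coordinate, their minimum distance in that coordinate is $(k-1)R_{\min}$), so $B'$ ranges over at most $5^d=O(1)$ buckets; moreover $B'\neq B$, since two sensors in the same bucket at distance $\le R_{\min}$ are joined by an edge of $G_u[S_B]$ and therefore lie in the same cluster. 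Since, by Lemma~\ref{lem_clust}, each bucket hosts $O(1)$ clusters, $c$ has at most $O(1)$ neighboring clusters.

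For the second assertion, a sensor $s$ contributes to $I(p)$ in $G_{\bar r}$ exactly when $\bar r(s)\ge d(s,p)$; together with the radius bound this gives $d(s,p)\le R_{\min}$, i.e.\ $s$ lies in the ball of radius $R_{\min}$ about $p$. If $p$ lies in bucket $B$, this ball is contained in the union of $B$ with the $O(1)$ buckets within distance $R_{\min}$ of $B$, so only sensors of $B$ or of a neighboring bucket can affect the interference of $p$.

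The only step that takes any care is the geometric bookkeeping in the first assertion --- counting how many cells of the $R_{\min}$-grid can lie within distance $R_{\min}$ of a fixed cell --- but since we only need a constant depending on the dimension, a crude bound such as $5^d$ suffices and there is no genuine difficulty; everything else follows directly from the fact that the construction never assigns a radius larger than $R_{\min}$.
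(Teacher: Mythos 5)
Your proposal is correct and follows essentially the same route as the paper: rule out same-bucket neighbors via the definition of clusters, bound the number of buckets that can contain a witness at distance at most $R_{\min}$ by a dimension-dependent constant, multiply by the $O(1)$ clusters per bucket from Lemma~\ref{lem_clust}, and derive the second claim directly from the fact that $\bar r(s)\le R_{\min}$ for every sensor. The only cosmetic difference is that you settle for the cruder constant $5^d$ where the paper observes that witnesses must in fact lie in adjacent buckets (giving $3^d-1$); both are constants, so nothing is lost.
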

\begin{proof}
First notice that $c$ cannot be neighbor to another cluster of the same bucket (since it would contradict with the definition of cluster). The distance between any two points of non-neighboring buckets is larger than $R_{\min}$, hence two clusters can only be neighbors if they belong to adjacent buckets. A bucket has  $3^d-1$ neighboring buckets (a constant for fixed dimension). By Lemma \ref{lem_clust}, each such bucket can have a constant number of clusters, hence the total amount of neighboring clusters of $c$ is also bounded by a constant. The second claim is direct from the fact that no sensor is assigned radius larger than $R_{\min}$ in $G_{\bar{r}}$.
\end{proof}

\begin{lemma}\label{lem_rmin}
Any $u,u'\in S$ be two sensors such that $\bar{r}(u)=\bar{r}(u')=R_{\min}$. $u$ and $u'$ can send messages to each other, even by only using edges present in $G_{\bar{r}} \cap G_u$.
\end{lemma}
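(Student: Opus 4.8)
The plan is to first reduce the claim to a statement about leaders: once we know that every sensor with radius $R_{\min}$ can communicate (inside $G_{\bar r}\cap G_u$) with some leader of its own cluster, and that all leaders can communicate with each other across the whole network, the lemma follows by concatenating three paths (from $u$ to a leader, leader-to-leader, leader to $u'$). So I would organize the argument around these two sub-claims.

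First I would handle communication within a cluster. Let $u\in c$ with $\bar r(u)=R_{\min}$. By Lemma \ref{lem_leaders} there is a leader $\ell_u\in L_c$ with $d(u,\ell_u)\le R_{\min}$; since both $u$ and $\ell_u$ have radius $R_{\min}$ (leaders always do), the edge $u\ell_u$ is present in both $G_{\bar r}$ and $G_u$. Next I would show that the leaders of a single cluster $c$ are mutually connected using only edges of $G_{\bar r}\cap G_u$: this is essentially the content of how $L_c$ was built — leaders of each occupied sub-bucket are linked to leaders of other occupied sub-buckets via edges of $G_u[c]$ (whose endpoints, being leaders, have radius $R_{\min}$, so these edges survive in $G_{\bar r}$ as well), and sensors in the same sub-bucket are within distance $R_{\min}/\sqrt d\le R_{\min}$ of each other. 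One has to check that the sub-bucket "graph on leaders" is connected, which follows because the cluster $c$ is connected in $G_u[c]$ and hence its occupied sub-buckets form a connected structure under the pair-edges we selected.

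Second I would handle communication between clusters, i.e. that the leaders of \emph{all} clusters lie in a single connected component of $G_{\bar r}\cap G_u$. Here I would use the inter-cluster modification: whenever $c$ and $c'$ are neighboring clusters we raised two witnesses $w\in c$, $w'\in c'$ with $d(w,w')\le R_{\min}$ to radius $R_{\min}$, so $ww'\in G_{\bar r}\cap G_u$; moreover $w$ (now radius $R_{\min}$) reaches a leader of $c$ by the intra-cluster argument above, and similarly $w'$ reaches a leader of $c'$. Thus neighboring clusters' leader-sets are joined. To finish I need that the "cluster adjacency graph" (clusters as vertices, neighboring pairs as edges) is connected; this is where I would invoke validity of the original assignment $r$: $G_r$ is strongly connected, any edge $st\in G_r$ has $d(s,t)\le\min\{r(s),r(t)\}$, and — crucially — $d(s,t)\le R_{\min}$ automatically if $s,t$ are in the same cluster, while if they are in different clusters $c,c'$ then $d(s,t)\le R_{\min}$ forces (since buckets have side $R_{\min}$) $c$ and $c'$ to be in the same or adjacent buckets and hence neighboring; an edge of $G_r$ between clusters at distance $>R_{\min}$ cannot occur. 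Hence connectivity of $G_r$ projects down to connectivity of the cluster adjacency graph.

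The main obstacle I expect is the last point: ruling out that $G_r$ could join two clusters by a "long" edge that does not correspond to a neighboring pair. This is handled by observing that an edge $st\in G_r$ satisfies $d(s,t)\le R_{\min}$: indeed $r$ is a valid assignment, and by the classical fact quoted earlier that $R_{\min}$ equals the longest minimum-spanning-tree edge of $S$, every edge of any connected network on $S$ — in particular every edge of the MST, and via standard exchange reasoning every $G_r$-edge needed for connectivity — can be taken of length at most $R_{\min}$; more simply, if some $G_r$-edge had length $>R_{\min}$ it could be re-routed through the MST without increasing any radius beyond $R_{\min}$, so we may assume $G_r$ uses only edges of length $\le R_{\min}$. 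With that in hand, every edge of (a connectivity-spanning subgraph of) $G_r$ stays within a bucket or between adjacent buckets, so it either lives inside a cluster or certifies a neighboring pair, and connectivity of $G_r$ forces the cluster adjacency graph to be connected, completing the argument.
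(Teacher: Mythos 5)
Your proposal is correct and follows essentially the same route as the paper: reduce to connecting leaders via Lemma~\ref{lem_leaders}, use the sub-bucket pair-edges for intra-cluster leader connectivity, use witnesses for neighboring clusters, and obtain global connectivity from $G_u$ (the paper inducts on the number of clusters traversed by a minimal $G_u$-path rather than contracting to a cluster adjacency graph, but this is cosmetic). Your detour through $G_r$ and MST re-routing is unnecessary, since $G_u$ is connected by the definition of $R_{\min}$ and every $G_u$-edge has length at most $R_{\min}$, which immediately yields connectivity of the cluster adjacency graph.
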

\begin{proof}
By Lemma \ref{lem_leaders}, there exist leaders $\ell, \ell'$ such that the edge $u\ell$ is in $G_{\bar{r}}\cap G_u$ (analogously $\ell'$ and the edge $u'\ell'$). Hence, it suffices to connect $\ell$ and $\ell'$. Let $\pi=(\ell=v_0, \ldots, v_k=\ell')$ be a path that connects them in $G_u$ and traverses the minimum possible number of different clusters. Let $m$ be the total number different clusters of that the path $\pi$ traverses, we will show our claim by induction on $m$. 

Consider first the case in which $m=0$; that is all sensors $v_i$ belong to the same cluster $c$ of a bucket $B$. For any $i\leq k$, let $b_i$ be the sub-bucket to which sensor $v_i$ belongs to. % and let $m$ be the total number different sub-buckets of that the path $\pi$ traverses.  We will show our claim by induction on $m$. In the following we will show an alternative path from . 
By construction of the set of leaders, each time we have $b_i\neq b_{i+1}$, there exist two leaders $t_i$ and $s_{i+1}$ such that $d(t_i,s_{i+1})\leq R_{\min}$ and belong to sub-buckets $b_i$ and $b_{i+1}$, respectively. Since both $t_i$ and $s_{i+1}$ are leaders, we must have $t_is_{i+1} \in G_{\bar{r}}\cap G_u$. Our aim is to connect $\ell$ and $\ell'$ by hopping through sensors $s_i$ and $t_i$. In order to do so we must define $s_i$ and $t_i$ for the case in which the path does not change sub-bucket. Whenever $b_i=b_{i+1}$, we simply set $t_{i}:=s_{i+1}:=s_{i}$ (we also define $s_0=\ell$ and $t_k=\ell'$). By choice of the $s_i$ and $t_i$ sensors, we always have $s_it_i \in G_{\bar{r}}\cap G_u$ for all $i\leq k$ and $t_{i}s_{i+1} \in G_{\bar{r}}\cap G_u$ for all $i< k$. % (since $s_i$ and $t_i$ are leaders that belong to the same sub-bucket).
 In particular, the path $\pi'=(\ell=s_0, t_0,s_1,t_1, \ldots, t_{k-1},s_k,t_k=\ell')$ will be feasible in $G_{\bar{r}}\cap G_u$.

Assume now that the path $\pi$ traverses different buckets. Let $i+1$ be the smallest index such that $v_{i+1}$ does not belong to the same cluster as $\ell$. By induction, the subpaths $(\ell=v_0, \ldots v_i)$ and $(v_{i+1}, \ldots , v_k=\ell')$ are feasible, hence we must connect $v_i$ with $v_{i+1}$. 

Observe that the clusters containing $v_i$ and $v_{i+1}$ are neighboring (since $v_i$ and $v_{i+1}$ are witnesses to this fact). By definition of $\bar{r}$, there will exist two sensors $w_1,w_2$ that belong to the clusters of $v_i$ and $v_{i+1}$, (respectively), and that the edge $w_1w_2$ is present in $G_{\bar{r}}\cap G_u$. We again use induction and obtain that there must exist paths $\pi_1$ (resp. $\pi_2$) connecting sensors $v_i$ and $w_1$ (resp. $v_{i+1}$ and $w_2$). Hence, by concatenating these two paths with the edge $w_1w_2$ we can connect $\ell$ with $\ell'$.% paths $\pi_1$, the edges $\ell_s$, $st$, $t\ell_t$ and the path $\pi_2$ we get a path that connects $\ell$ to $v_{i+1}$. 
\end{proof}

%use Lemma \ref{lem_leaders} to obtain that there exist leaders $\ell_1$ and $\ell_2$ such that the edges  $w_1\ell_1$ and $w_2\ell_2$ are in $G_{\bar{r}}\cap G_u$. Observe that $\ell$ and $v_i$ belong to the same cluster (analogously $\ell_t$ and $v_{i+1}$). By induction, there must exist paths $\pi_1$ and $\pi_2$ connecting them. Hence, by concatenating the paths $\pi_1$, the edges $\ell_s$, $st$, $t\ell_t$ and the path $\pi_2$ we get a path that connects $\ell$ to $v_{i+1}$. 
%\end{proof}

%We start by showing strong connectivity inside each cluster: 
\begin{lemma}\label{lem_globalconnec}
$G_{\bar{r}}$ is connected and has $O(i)$ interference.
\end{lemma}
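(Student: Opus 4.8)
The plan is to establish the two claims of Lemma~\ref{lem_globalconnec} separately, drawing on the machinery already in place.

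\textbf{Connectivity.} First I would show $G_{\bar r}$ is connected. The key observation is that every sensor can reach \emph{some} sensor of radius $R_{\min}$ using only edges of $G_{\bar r}\cap G_u$. Indeed, by Lemma~\ref{lem_leaders}, for any sensor $s$ in a cluster $c$ there is a leader $\ell_s\in L_c$ with $d(s,\ell_s)\le R_{\min}$; since $\bar r(\ell_s)=R_{\min}$ and $\bar r(s)\ge\min\{r(s),R_{\min}\}$... wait, I need $s$ to be able to send \emph{to} $\ell_s$, which needs $\bar r(s)\ge d(s,\ell_s)$. This is not guaranteed if $r(s)$ is tiny. So instead I would argue as follows: the original assignment $r$ is valid, so $G_r$ is connected; hence in $G_r$ there is a path from $s$ to $\ell_s$ (or to any other vertex). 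Every edge $uv$ of $G_r$ satisfies $d(u,v)\le\min\{r(u),r(v)\}$; I claim it survives in $G_{\bar r}$. If $d(u,v)\le R_{\min}$ then $\bar r(u)\ge\min\{r(u),R_{\min}\}\ge d(u,v)$ and likewise for $v$, so $uv\in G_{\bar r}$; if instead $d(u,v)>R_{\min}$, then $u$ and $v$ lie in different buckets or the same bucket but in clusters at distance $>R_{\min}$—but actually, since $d(u,v)\le R_{\min}$ is needed for $uv\in G_u$, long edges of $G_r$ simply connect clusters that are either neighboring (if some pair of their sensors is within $R_{\min}$) in which case the witness edge of $\bar r$ reconnects them, or... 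I would make this precise: it suffices to show every cluster can reach every neighboring cluster in $G_{\bar r}$ (via the witness edge, promoted to radius $R_{\min}$, together with Lemma~\ref{lem_rmin} to route internally between the witness and the leaders), and that the "cluster graph'' is connected because $G_r$ is connected and any $G_r$-edge either stays within a cluster, or joins two clusters which are therefore neighboring. Chaining these, and using Lemma~\ref{lem_rmin} to move between any two radius-$R_{\min}$ sensors, gives a path between any two sensors of $S$.

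\textbf{Interference.} For the bound $O(i)$, fix a point $p\in\R^d$ and count sensors $s$ with $\bar r(s)\ge d(s,p)$. By Observation~\ref{obs_neigh}, only sensors in the bucket $B$ of $p$ or one of its $3^d-1$ neighbors can contribute, so it suffices to bound the contribution of a single bucket $B'$ by $O(i)$. Sensors in $B'$ split into those that kept radius $\min\{r(s),R_{\min}\}\le r(s)$ and those promoted to $R_{\min}$ (leaders, and witnesses). For the first kind, $\bar r(s)\le r(s)$, so if $s$ contributes in $G_{\bar r}$ it also contributes in $G_r$—hence the first kind contributes at most $I_r(p)\le i$. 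For the promoted sensors: by Lemma~\ref{lem_clust} there are $O(1)$ clusters per bucket, by Lemma~\ref{lem_leaders} each has $O(1)$ leaders, and by Observation~\ref{obs_neigh} each cluster has $O(1)$ neighbors so contributes $O(1)$ witnesses; thus $B'$ holds only $O(1)$ promoted sensors. Summing over the $O(1)$ relevant buckets: $O(i)+O(1)=O(i)$.

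\textbf{Expected main obstacle.} The delicate point is the connectivity argument, specifically handling sensors whose original radius $r(s)$ was much smaller than $R_{\min}$: such a sensor cannot necessarily \emph{initiate} a message to a leader, so I cannot simply say "every sensor reaches a leader.'' The clean fix is to route \emph{using $G_r$-edges}, which are preserved in $G_{\bar r}$ whenever they are short (length $\le R_{\min}$), and to show that the only $G_r$-edges that could be lost are "long'' ones whose endpoints lie in distinct, necessarily neighboring, clusters—where the dedicated witness edges of $\bar r$ restore connectivity. Verifying that every long $G_r$-edge indeed joins two neighboring clusters (in the sense of the definition preceding Observation~\ref{obs_neigh}) and that the internal routing between a witness and the rest of its cluster works is exactly what Lemma~\ref{lem_rmin} supplies, so the proof is really about assembling these pieces in the right order rather than any new estimate.
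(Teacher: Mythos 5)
Your interference bound is correct and matches the paper's argument essentially step for step: sensors with $\bar r(s)\le r(s)$ contribute at most $i$ in total, and the promoted sensors (leaders and witnesses) number $O(1)$ per cluster, with $O(1)$ clusters per relevant bucket and $O(1)$ relevant buckets by Observation~\ref{obs_neigh}.

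The connectivity argument, however, has a genuine gap. You correctly identify the delicate case — a $G_r$-edge $uv$ with $d(u,v)>R_{\min}$, which need not survive in $G_{\bar r}$ — but your proposed resolution, namely that ``every long $G_r$-edge joins two \emph{neighboring} clusters, where the witness edge reconnects them,'' is false. Neighboring clusters are defined by the existence of a witness pair at distance \emph{at most} $R_{\min}$, whereas the endpoints of a long $G_r$-edge are at distance strictly greater than $R_{\min}$ and may in fact lie many buckets apart (picture two dense groups separated by a sparse chain of relay sensors that keeps $R_{\min}$ small, with one large-radius sensor in each group joined by a single long $G_r$-edge). No witness edge bridges such a pair. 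The correct handling, and the one the paper uses, is more direct: since $uv\in G_r$ and $d(u,v)>R_{\min}$, both $r(u)$ and $r(v)$ exceed $R_{\min}$, hence $\bar r(u)=\bar r(v)=R_{\min}$, and Lemma~\ref{lem_rmin} --- which applies to \emph{any} two sensors of radius $R_{\min}$, not just ones in neighboring clusters, and whose own proof already performs the multi-cluster routing by induction along a $G_u$-path --- supplies the replacement path in $G_{\bar r}$. You invoke Lemma~\ref{lem_rmin} only for routing ``internally between the witness and the leaders,'' i.e., within one cluster; using it at full strength on the two endpoints of the long edge closes the gap and makes the cluster-graph detour unnecessary. (If you do want a cluster-graph argument, the connectivity of that graph should be derived from the connectivity of $G_u$ --- whose edges all have length at most $R_{\min}$ and therefore genuinely join neighboring clusters --- not from $G_r$.)
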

\begin{proof}
%We now show that $G_{\bar{r}}$ is strongly connected. 
For any two sensors $s,t\in S$, let $\pi=(s=v_0, \ldots, v_k=t)$ be the shortest path connecting them in $G_{r}$. If all edges $v_jv_{j+1}$ are present in $G_{\bar{r}}$, the path $\pi$ is feasible in $G_{\bar{r}}$ for some $j <k$. The only situation in which the edge $v_jv_{j+1}$ might not present in $G_{\bar{\mathcal{A}}}(S)$ is if the radius of $v_j$ or $v_{j+1}$ was reduced below $d(v_j,v_{j+1})$. By construction of the $\bar{r}$ assignment, this can only happen if $d(v_j,v_{j+1})>R_{\min}$. In particular, we must have $\bar{r}(v_j)=\bar{r}(v_{j+1})=R_{\min}$. In this case we can use Lemma \ref{lem_rmin} to obtain connectivity between $v_j$ and $v_{j+1}$ and proceed walking along $\pi$.

We must now show that the interference of $G_{\bar{r}}$ is indeed $O(i)$. Since no sensor is assigned radius larger than $R_{\min}$, the interference of any point $p$ can only be affected by sensors that are in the same or a neighboring bucket of the one containing $p$. There are exactly $3^d$ such buckets (a constant for fixed dimension). Combining this fact with Lemma \ref {lem_clust}, we obtain that only a constant number of buckets can affect to the interference of $p$. Hence, it suffices to see that a single cluster $c$ can only contribute a constant amount of additional interference.

Clearly, the sensors that satisfy $r(s)=\bar{r}(s)$ cannot contribute more than $i$ interference to $p$, hence we focus on the sensors of $c$ whose radii was increased. By definition of $\bar{r}$, this only happens for the set $L_c$ or sensors whose radius was increased to have connectivity with neighboring clusters. By Lemma \ref{lem_leaders} and Observation \ref{obs_neigh}, either case can only happen a constant number of times, hence the claim is shown.
\end{proof}

This completes the proof of Theorem \ref{theo_boundra}. Combining this result with the $GHUB$ network given in \cite{ht-2008-miwanp} we obtain a method to construct a network with low interference and bounded radii. 

\begin{theorem}\label{theo_llncluster}
For any set $S$ of $n$ points in $\R^d$, there exists a valid radii assignment of bounded radius $R_{\min}$ and $O(\sqrt\Delta)$ interference (for $d\leq 2$) or $O(\sqrt{\Delta\log \Delta})$ (otherwise) interference. Moreover, the bounded radius construction only adds an additional computation cost of $T(n)$, where $T(n)$ is the time needed to compute the minimum spanning tree of a set of $n$ points in $\R^d$. 
\end{theorem}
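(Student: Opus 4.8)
The plan is to compose the two building blocks already available. First I would invoke the $GHUB$ construction of Halld\'{o}rsson and Tokuyama~\cite{ht-2008-miwanp}: for any point set $S\subset\R^d$ it produces a valid radii assignment $r$ whose interference is $O(\sqrt\Delta)$ when $d\le 2$ and $O(\sqrt{\Delta\log\Delta})$ in general. By itself this assignment only guarantees bounded radius $\sqrt d\,R_{\min}$, so it does not meet the requirement. I would then feed $r$ into the transformation of Theorem~\ref{theo_boundra}. Since $r$ has interference $i=O(\sqrt\Delta)$ (resp. $O(\sqrt{\Delta\log\Delta})$), the theorem returns an assignment $\bar r$ of interference $O(i)$ --- still $O(\sqrt\Delta)$ (resp. $O(\sqrt{\Delta\log\Delta})$) --- with bounded radius exactly $R_{\min}$; strong connectivity of $\bar r$ is precisely Lemma~\ref{lem_globalconnec}. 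This establishes the first part of the statement.

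For the running-time claim I would charge the bounded-radius transformation separately. The only global quantity it uses is $R_{\min}$, which equals the length of the longest edge of the Euclidean minimum spanning tree of $S$, so obtaining it costs $T(n)$. Once $R_{\min}$ is fixed, the bucket grid of side $R_{\min}$ is fixed as well, and each sensor can be routed to its bucket in $O(n)$ or $O(n\log n)$ time by sorting/hashing on the integer coordinates $\lfloor x/R_{\min}\rfloor$. Everything else the transformation does --- splitting each bucket into its $d^d$ sub-buckets, computing the cluster partition of a bucket, picking the $O(1)$ leaders of each cluster, detecting neighbouring clusters and a witness pair for each, and finally resetting every radius to $\min\{r(s),R_{\min}\}$ or to $R_{\min}$ --- is local: it only inspects a constant number of buckets around each sensor, and costs $O(n\log n)$. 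Since $T(n)=\Omega(n\log n)$ in all dimensions we consider, the transformation adds only $O(T(n))$ on top of the cost of building the $GHUB$ network.

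The one step that needs a little care, and which I expect to be the only real obstacle, is computing the clusters inside a bucket $B$ without materialising the induced subgraph $G_u[S_B]$, which can be dense (up to $\Theta(\Delta)$ vertices and $\Theta(\Delta^2)$ edges). The observation that rescues us is the one already used in Lemmas~\ref{lem_clust} and~\ref{lem_leaders}: each sub-bucket has diameter at most $R_{\min}$, so its sensors form a clique of $G_u$, and two sub-buckets lie in the same cluster if and only if some edge of $G_u$ joins them, i.e. if and only if their bichromatic closest pair is at distance at most $R_{\min}$. Hence the cluster partition of $B$ is simply the connected-components structure of an auxiliary graph on its $d^d=O(1)$ sub-buckets, each of whose $O(1)$ candidate edges is decided by one bichromatic-closest-pair query, answerable in $O(|S_B|\log|S_B|)$ time (faster with a grid refinement); the same queries also exhibit the witness edges needed for the leader sets. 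Summed over all buckets this is $O(n\log n)$, confirming the bound claimed above, and the remaining bookkeeping is routine.
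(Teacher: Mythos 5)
Your proposal is correct and follows essentially the same route as the paper, which likewise obtains Theorem~\ref{theo_llncluster} by combining the $GHUB$ construction with the bounded-radius machinery of Theorem~\ref{theo_boundra} (buckets, clusters, leaders, witnesses) and charges the extra work to MST computations; the only cosmetic difference is that the paper runs $GHUB$ separately inside each cluster, which also keeps the construction local, whereas you run it globally and then apply the transformation wholesale. One small caveat: your claim that each bichromatic-closest-pair query costs $O(|S_B|\log|S_B|)$ is optimistic for $d\ge 3$, but this is harmless because each such query can instead be answered by computing the MST of the union of the two sets (as the paper does), which keeps the total additional cost at $O(T(n))$.
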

\begin{proof}
%In order to apply the bounded radius strategy, the algorithm must perform the following operations:
Our algorithm proceeds as follows: 
\begin{enumerate}
\item Compute $R_{\min}$, partition the sensors of $S$ into buckets and further split each bucket into clusters.
\item For each cluster $c$ apply the $GHUG$ construction algorithm, capping the maximum radius to $R_{\min}$. 
\item For each cluster also compute its set of leaders and increase their radius to $R_{\min}$.
\item For any two neighboring clusters $c$ and $c'$ pick two witnesses $s\in c$ and $s'\in c'$ and increase their radii to $R_{\min}$.
\end{enumerate}

Recall that $R_{\min}$ is equal to the length of the longest edge in the Euclidean minimum spanning tree, hence it can be computed in $O(T(n))$ time. Classifying the sensors of $S$ into buckets can esily be done in $O(n\log n)$ time for any fixed dimension. % by sorting the points in the $x$-coordinates and classifying them into strips of size $R_{\min}$. Once this is done we further classify them by sorting in the remaining $d$ coordinates. After $d$ steps, the sensors of $S$ will be classified in $O(n)$ non-empty buckets.
 For each non-empty bucket $B$ we compute the minimum spanning tree of $S_B$ and delete the edges whose length is larger than $R_{\min}$. It is easy to see that each edge removed will create an additional cluster inside the bucket. The total cost of this operation is $O(T(n_B))$, where $n_B$ is the number of points in $S_B$. Since $\sum_{S_B\neq \emptyset} n_B =n$, the total running time of the first step is bounded by $O(T(n))$.

Observe that the last two operations are very similar: in both cases we are given a collection $\mathcal{L}$ of sets of sensors, and for each pair $U,V\in \mathcal{L}$ we must find  two sensors $u\in U,v\in V$ such that $d(u,v)\leq R_{\min}$ (if any exists). Given a pair of sets $U,V$, we can check if pair of nearby sensors exists by computing the minimum spanning tree of the set $U \cup V$. One of the edges of the tree will be the closest pair between sensors of $U$ and $V$. 

This operation must be repeated for every pair of sets in $\mathcal{L}$, giving a total running time of $\sum_{i=1}^k \sum_{j\in \mathrm{Cand}(i)} O(T(m_i+m_j))=O(T(m_i)+T(m_j))$, where $m_i$ is the size of the $i$-th set, $k$ is the total number of sets in the collection, and $\mathrm{Cand}(i)$ is the set of candidate sets of the $i$-th set (i.e., sets for which there might exist sensors whose distance is at most $R_{\min}$). When computing leaders, the set $\mathrm{Cand}(i)$ is of constant size (because there are a constant number of sub-buckets). Moreover, by Observation \ref{obs_neigh}, any cluster can only be a candidate for a constant number clusters. That is, each set appears a constant number of times in the above expression. In particular, the above sum can be expressed as $\sum_{i=1}^k O(1)\times T(n_i)=O(T(n))$.  
\end{proof}

If we consider the asymmetric model, we can replace the GHUB construction and use the {\em all-to-one} construction of \cite{rwz-2009-amiwasn} instead. By doing so, we can reduce the interference.
 
\begin{corollary}
Under the asymmetric model, the interference can be reduced to $O(\log \Delta)$. 
\end{corollary}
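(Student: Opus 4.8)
The plan is to mirror the construction of Theorem~\ref{theo_llncluster}, replacing only the per-cluster network generator. Instead of running the $GHUB$ construction inside each cluster, I would run the \emph{all-to-one} construction of~\cite{rwz-2009-amiwasn} inside each cluster $c$, designating an arbitrary leader $\ell_c\in L_c$ as the sink for that cluster. Since the all-to-one construction achieves $O(\log\Delta)$ interference on a point set whose strong-connectivity radius is at most the cluster diameter (which is $\le R_{\min}$), and since it already assigns radius at most $R_{\min}$ to every non-sink sensor, the only modification needed is to cap the sink's radius at $R_{\min}$ and to keep the leaders and inter-cluster witnesses at radius $R_{\min}$, exactly as before. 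All other radii are left untouched (or capped at $R_{\min}$, which only helps interference).

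First I would verify strong connectivity of the resulting directed network. Within a cluster $c$, every sensor reaches the sink $\ell_c$ by construction of the all-to-one network; conversely, giving $\ell_c$ radius $R_{\min}$ together with Lemma~\ref{lem_leaders} guarantees that $\ell_c$ can reach every other leader, and hence (by the sub-bucket argument of Lemma~\ref{lem_rmin}) every sensor of $c$, using only edges of $G_u$. Between clusters, the witness edges of radius $R_{\min}$ plus Lemma~\ref{lem_rmin} give bidirectional leader-to-leader connectivity across neighboring clusters, so the ``glue'' argument of Lemma~\ref{lem_globalconnec} carries over verbatim to the directed setting (the asymmetric model is only more permissive, as noted in the excerpt). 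For any two sensors $s,t$, walking the shortest $s$--$t$ path in the original valid assignment and repairing each broken edge via Lemma~\ref{lem_rmin} still works, since a broken edge again forces both endpoints to have radius exactly $R_{\min}$.

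Next I would bound the interference. By Observation~\ref{obs_neigh} only a constant number of buckets, and hence (Lemma~\ref{lem_clust}) a constant number of clusters, can contribute to the interference of any point $p$. Within a single cluster the contribution splits into three parts: sensors whose radius was left unchanged contribute at most the interference of the all-to-one sub-network, which is $O(\log\Delta)$; the leaders contribute $O(1)$ by Lemma~\ref{lem_leaders}; and the inter-cluster witnesses contribute $O(1)$ by Observation~\ref{obs_neigh}. Summing over the constantly many relevant clusters yields $O(\log\Delta)$ total. The running-time analysis is identical to that of Theorem~\ref{theo_llncluster}, since computing leaders, witnesses, and $R_{\min}$ is unchanged and the all-to-one construction per cluster is no more expensive than $GHUB$.

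The main obstacle I anticipate is a bookkeeping subtlety rather than a conceptual one: the all-to-one construction of~\cite{rwz-2009-amiwasn} needs global knowledge of its input, so one must check that confining it to a single cluster is legitimate and that its stated $O(\log\Delta)$ interference bound still holds when $\Delta$ refers to the global optimum rather than the cluster-local one. This is fine because the interference of the uniform-radius network restricted to a cluster is at most $\Delta$, so the cluster-local parameter is no larger than the global $\Delta$, and the bound only improves. The second point to handle carefully is that ``interference'' in the asymmetric model counts in-degree in $G'_{\bar r}$ (equivalently, depth in the disk arrangement, since all radii are $\le R_{\min}$), so the geometric argument above applies without change.
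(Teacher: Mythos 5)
Your proposal is correct and follows essentially the same route as the paper, which obtains the corollary by substituting the all-to-one construction for $GHUB$ inside each cluster while keeping the leaders and witnesses at radius $R_{\min}$ (the paper explicitly notes that this bucketing resolves the all-to-one construction's need for one large-radius sink and for non-local information, exactly the two issues you handle). Your observation that the per-cluster interference is $O(\log\Delta)$ because each cluster contains only $O(\Delta)$ sensors (Lemma~\ref{lem_clust}) is the right way to justify the stated bound.
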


Recall that, the {\em all-to-one} construction needs to assign large communication radius to a given sensor (so that it can send messages to all other sensors of $S$). Moreover, the exact radii assignment of a sensor would depend on non-local properties. The approach presented in here solves both issues.

\subsection*{Remarks} 
The currently best known algorithms that compute the minimum spanning tree run in $O(n\log n)$ time (for $d=2$),  $O((n\log n)^{4/3})$ (for $d=3$) or $O(n^{2-\frac{2}{(d/2+1)\epsilon}})$ (for higher dimensions)~\cite{aesw-emstbcp-91}. The time needed to construct the $GHUB$ network is polynomial, but the exact cost depends on the dimension and the computation model used (see more details in \cite{ht-2008-miwanp}). The {\em all-to-one} network can also be constructed by adding several layers of the nearest neighbor graph, each time of smaller size~\cite{rwz-2009-amiwasn}. This graph can be computed in $O(n\log n)$ time for any fixed dimension~\cite{c-faannp-83,v-aaannp-89}, hence the dominating term is $O(T(n))$ in the asymmetric model.
 
As mentioned in the Introduction, the radius assigned to a sensor $s$ in this algorithm only depends on the sensors the bucket containing $s$ and its neighboring buckets, regardless of which network model that we use. As a result, each sensor can compute its communication radius using of only local information, provided that the value of $R_{\min}$ is known in advance. Unfortunately, the value of $R_{\min}$ is a global property that cannot be easily computed in a distributed environment. Whenever this value is unknown, it can be replaced by any larger value (like for example the largest possible communication radius). By doing so we retain the local construction property and only increase the interference from $O(\sqrt{\Delta})$ or $O(\log \Delta)$ to $O(\sqrt{n})$ or $O(\log n)$, respectively.

\section{Conclusion}
%In this paper we have given a method to construct a strongly connected network. In addition to having low interference, it has bounded maximum radius and can be constructed in a distributed fashion. Matching lower bounds for both the interference and maximum radius are also given, hence the algorithm is asymptotically optimal in both criteria. %We also note that, although the LNN construction of Section \ref{sec_LLN} does not hold in undirected networks, the clustering technique of Section \ref{sec_bucket} can be used in undirected networks. It remains open to find an algorithm that approximates the interference of the optimal network (similar to the one found in \cite{rwz-2009-amiwasn} for the one-dimensional directed case).
Finding a method to approximate the minimum interference of any given problem instance by a $o(\sqrt{\Delta})$ factor is one of the most important open problems in this field~\cite{bp-2008-ocmiasn}. The techniques introduced in this paper provide a small step towards this goal, since they tell us that it is sufficient to construct such a network in a centralized fashion, assuming that the exact location of all sensors is known. 

\iffalse
\begin{figure*}
\centering
\includegraphics[width=.3\textwidth]{fig_5path}
\caption{A cluster of points (divided into the four sub-buckets) that needs five leaders in two dimensions. In the figure the five depicted sensors form a 5-path $G_u$ (dashed edges). If four or less sensors are selected as leaders, the resulting cluster will not be connected.}
\label{fig_5path}
\end{figure*}
\fi

The key property of our bounded radius approach is the fact that, for any cluster $c$, we can construct a set of constant size that forms a connected dominating set of $G_u[c]$. %Observe that, although the size is bounded in any fixed dimension the size of our construction is exponential on $d$. 
Although the size of this set is constant for fixed dimension, the exact value is quite large. Hence, a natural open problem is to reduce the size of this set. For $d=2$ our construction might create a set of 12 leaders for a cluster, although a more intricate proof can show that 5 sensors are always sufficient and sometimes necessary. On the negative side, we know that the exponential dependency on $d$ cannot be avoided: consider the case in which $R_{\min}=1$ and there exists a cluster with many sensors covering a unit hypercube. Any dominating set of that cluster must cover the whole cube. Since the ratio between the volume of the unit cube and the volume of the unit ball is $\frac{(d/2)!}{\pi^{d/2}} \approx \frac{\sqrt{\pi d}d^{d/2}}{(2e\pi)^{d/2}}$, at least such many sensors will be needed to dominate the cluster.

%---------------------------- Bibliography -------------------------------

\subsection*{Acknowledgments}
{\small The author would like to thank Stephane Durocher, Maria \'{A}ngeles Garrido, Clara Grima and Alberto M\'arquez for interesting discussions on the subject, as well as the anonymous referees of ISAAC for a very thorough review.}

{\small 
\bibliographystyle{abbrv}
\bibliography{interference}}

\end{document}